\newcommand{\dw}{\nobreak{D-Wave} Two}%\texttrademark}
\newcommand{\fc}{\mathcal{C}}
\newcommand{\be}{\begin{equation}}
  \newcommand{\ee}{\end{equation}}
\newcommand{\bea}{\begin{eqnarray}}
  \newcommand{\eea}{\end{eqnarray}}
\newcommand{\br}[1]{\left\{{#1}\right\}}
\newcommand{\abs}[1]{\left|{#1}\right|}
\newcommand{\norm}[1]{\left\|{#1}\right\|}
\newcommand{\cX}{\mathcal{X}}
\newcommand{\cB}{\mathcal{B}}
\renewcommand{\epsilon}{\varepsilon}%better epsilon
\begin{document}

\title{Fast clique minor generation in Chimera qubit connectivity graphs\thanks{This research was partially supported by the Mitacs Accelerate program.}}

\author{Kelly Boothby         \and
  Andrew D.~King        \and
  Aidan Roy
}

\institute{K. Boothby (\email{boothby@dwavesys.com}) 
  %\at Simon Fraser University, 8888 University Drive, Burnaby, BC, Canada, V5A-1S6 
  \and
  A. King (corresponding author, \email{aking@dwavesys.com})
  \and  A. Roy (\email{aroy@dwavesys.com})
  \at D-Wave Systems Inc., 3033 Beta Avenue, Burnaby, BC, Canada, V5G-4M9 
}

\date{\today}

\maketitle

\begin{abstract}
The current generation of D-Wave quantum annealing processor is designed to minimize the energy of an Ising spin configuration whose pairwise interactions lie on the edges of a {\em Chimera} graph $\fc_{M,N,L}$.  In order to solve an Ising spin problem with arbitrary pairwise interaction structure, the corresponding graph must be minor-embedded into a Chimera graph.  We define a combinatorial class of {\em native clique minors} in Chimera graphs with vertex images of uniform, near minimal size, and provide a polynomial-time algorithm that finds a maximum native clique minor in a given induced subgraph of a Chimera graph.  These minors allow improvement over recent work and have immediate practical applications in the field of quantum annealing.
\keywords{Graph minor, clique minor, Chimera, graph embedding, adiabatic quantum computing, quantum annealing.}
\end{abstract}

%%%%%%%%%%%%%%%%%%%%%%%%%%%%%%%%%%%%%%%%%%%%%%%%%%%%%%%%%%%%%%%%%%%%%%%%%%%%%%%% 
%%%%%%%%%%%%%%%%%%%%%%%%%%%%%%%%%%%%%%%%%%%%%%%%%%%%%%%%%%%%%%%%%%%%%%%%%%%%%%%% 
%%%%%%%%%%%%%%%%%%%%%%%%%%%%%%%%%%%%%%%%%%%%%%%%%%%%%%%%%%%%%%%%%%%%%%%%%%%%%%%% 
\section{Introduction and motivation}

D-Wave quantum annealing processors are designed to sample low-energy spin configurations in the Ising model using open-system quantum annealing \cite{Albash2015,Boixo2014,Dickson2013,Johnson2011}.  Input to the processor consists of an {\em Ising Hamiltonian} $(h,J)$, where $h \in \mathbb R^n$ is a vector of {\em local fields} and $J\in \mathbb R^{n\times n}$ is a matrix of {\em couplings}, which we assume here to be symmetric.  The {\em energy} of a spin configuration $s\in \{-1,1 \}^n$ is given as 
\begin{equation}
  E(s) = E(h,J,s) = s^TJ s + s^Th.
\end{equation}
The output of an {\em anneal} (i.e.\ a run) of the processor is a low-energy state $s$, which consists of an Ising spin (either $-1$ or $1$) for each {\em qubit}.

Nonzero terms of $J$ are physically realized using programmable {\em couplers}.  These couplers only exist between certain pairs of physically proximate qubits.  The input $(h,J)$ is therefore restricted such that if $J_{i,j}\neq 0$, there must be a coupler between qubit $i$ and qubit $j$.  We rephrase this in graph-theoretic terms: The {\em connectivity graph} of $(h,J)$, which we denote by $G_J$, is the undirected graph on $n$ vertices whose adjacency matrix has the same nonzero entries as $J$.  Likewise, each processor has a {\em hardware graph} $G$ representing the available qubits and couplers in the processor.  For $(h,J)$ to be input directly to the processor, $G_J$ must be a subgraph of $G$.  If this is not the case, we can indirectly input $(h,J)$ to the hardware by embedding $G_J$ as a graph minor of $G$.  Implementing graph minors in the Ising model involves putting a strong ferromagnetic coupling $J_{i,j}\ll 0$ between any two adjacent vertices $i,j$ of $G$ in the same vertex image.  This coupling compels multiple qubits to take the same spin, thus acting like a single logical qubit.  The method is studied in greater detail elsewhere \cite{Choi2008,Venturelli2014,Perdomo-Ortiz2015,King2014,Cai2014}.

In this paper we consider the problem of finding large clique minors in the hardware graph.  This is sufficient for minor-embedding any problem of appropriate size in a given hardware graph, and allows the study of random, fully-connected spin glass problems, as in recent work \cite{Venturelli2014}.  Klymko et al.\ first provided a polynomial-time algorithm for generating large clique minors in subgraphs of hardware graphs \cite{Klymko2014}.  In Section~\ref{sec:previouswork} we provide evidence that our algorithm uses fewer physical qubits and allows the embedding of larger minors.

%%%%%%%%%%%%%%%%%%%%%%%%%%%%%%%%%%%%%%%%%%%%%%%%%%%%%%%%%%%%%%%%%%%%%%%%%%%%%%%% 
%%%%%%%%%%%%%%%%%%%%%%%%%%%%%%%%%%%%%%%%%%%%%%%%%%%%%%%%%%%%%%%%%%%%%%%%%%%%%%%% 
%%%%%%%%%%%%%%%%%%%%%%%%%%%%%%%%%%%%%%%%%%%%%%%%%%%%%%%%%%%%%%%%%%%%%%%%%%%%%%%% 
\subsection{The Chimera graph and triangle embeddings}

D-Wave processors currently operate using a {\em Chimera} hardware graph $\fc_{M,N,L}$, which is an $M\times N$ grid of $K_{L,L}$ complete bipartite graphs (unit cells). \dw~ processors use a 512-qubit $\fc_{8,8,4}$ hardware graph, and the most recent processors use a 1152-qubit $\fc_{12,12,4}$ graph.

\begin{figure}
  \begin{center}
    \includegraphics{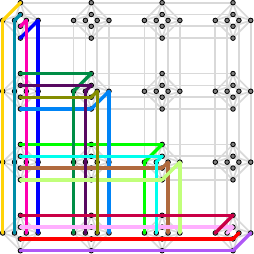}
  \end{center}
  \caption{(Color online) ``Triangle'' clique embedding in $\fc_{4,4,4}$, which motivated the design of the Chimera graph.  Every chain in this embedding has 5 qubits.}\label{fig:tridia}
\end{figure}

The Chimera graph was chosen, in part, because it contains a particularly nice clique minor \cite{Choi2011} (see Figure \ref{fig:tridia}).  This {\em triangle embedding} is uniform in the sense that each vertex image (or {\em chain}) has the same number of vertices, and is near-optimal in the sense that it gives a $K_{LM}$ minor in $\fc_{M,M,L}$, whereas $\fc_{M,M,L}$ has treewidth $LM$ and therefore contains no $K_{L M +2}$ minor\footnote{Taking the triangle embedding and making an image of all the unused qubits gives a $K_{LM +1}$ minor.}. A degree argument also shows that any uniform $K_{LM}$ minor requires chains of size at least $M$, while the triangle embedding has chains of size $M+1$.

In practice, a given processor will have a number of inoperable qubits.  If there are $t$ inoperable qubits, up to $t$ chains in a particular clique embedding can be rendered useless.  These inoperable qubits force us to find a clique minor in an induced subgraph of $\fc_{M,N,L}$.  In the face of this, note that there are at least four triangle embeddings, which we can find by simply rotating the embedding shown in Figure \ref{fig:tridia}.  So a first attempt at minimizing the impact of inoperable qubits is to choose the triangle embedding for which the greatest number of chains survive.

Triangle embeddings can be generalized further.  A triangle embedding consists of overlapping ell-shaped (L-shaped) ``bundles'' of chains, and each chain in a bundle joins a horizontal ``wire'' with a vertical ``wire'' via a matching at the corner of the bundle.  First, the structure of overlapping ell-shaped bundles can be generalized from the triangle (we can avoid all four corner unit cells, for example).  Second, the corner matchings can be chosen arbitrarily to minimize the impact of inoperable qubits (if there are three intact vertical wires and three intact horizontal wires, we can ensure that they are matched together to make three intact chains).  These generalizations result in exponential expansion of the number of clique embeddings available, but we can optimize over them in polynomial time using a dynamic programming approach.  Defining and efficiently optimizing over these clique embeddings are the main goals of this work.

In the next section we formalize the definition of {\em native clique embeddings} that generalize triangle embeddings, and give a combinatorial characterization of the same.  In Section \ref{sec:3} we give a dynamic programming technique that, given an induced subgraph of $\fc_{M,N,L}$, finds a maximum-sized native clique embedding in polynomial time.  One desirable feature of native clique embeddings is uniform chain length, which results in uniform, predictable chain dynamics throughout the anneal \cite{Venturelli2014}.  Clique embeddings found through heuristic methods such as the algorithm described by Cai et al.\ \cite{Cai2014} generally lack this property.  In Section \ref{sec:previouswork} we compare the results of our approach to those of the somewhat similar approach by Klymko et al.\ \cite{Klymko2014}.  The shorter chains and larger cliques generated by our approach lead to improved tunneling dynamics and error insensitivity \cite{Dziarmaga2005,Venturelli2014,Young2013}.

%%%%%%%%%%%%%%%%%%%%%%%%%%%%%%%%%%%%%%%%%%%%%%%%%%%%%%%%%%%%%%%%%%%%%%%%%%%%%%%% 
%%%%%%%%%%%%%%%%%%%%%%%%%%%%%%%%%%%%%%%%%%%%%%%%%%%%%%%%%%%%%%%%%%%%%%%%%%%%%%%% 
%%%%%%%%%%%%%%%%%%%%%%%%%%%%%%%%%%%%%%%%%%%%%%%%%%%%%%%%%%%%%%%%%%%%%%%%%%%%%%%% 
\section{Native Clique Embeddings}
\begin{figure}
  \begin{center}
    \includegraphics{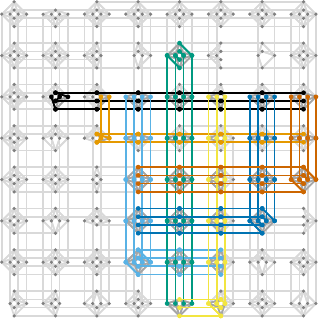}
    \includegraphics{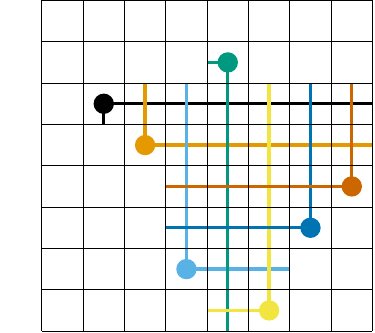}
  \end{center}
  \caption{(Color online) (\textit{l}) A maximum native clique embedding of $K_{24}$ in an induced subgraph of $\fc_{8,8,4}$ with 26 randomly selected vertices deleted. (\textit{r}) The corresponding block clique embedding, with dots indicating corners of the ell blocks. }\label{fig:nativeembedding}
\end{figure}

We now formally define the structures required to construct and analyze native clique embeddings.

Recall that Chimera $\fc_{M,N,L}$ is an $M\times N$ grid of $K_{L,L}$ unit cells.  Specifically, $\fc_{M,N,L}$ has vertices $V = \{1,\cdots,M\} \times \{1,\cdots,N\} \times \{0,1\} \times \{1,\cdots,L\}$ and edges:

\begin{tabular}{ll}
$(x,y,0,k) \sim (x+1,y,0,k)$ & (horizontal inter-cell couplings), \\
$(x,y,1,k) \sim (x,y+1,1,k)$& (vertical inter-cell couplings), and \\
$(x,y,0,k_1) \sim (x,y,1,k_2)$ & (intra-cell couplings).
\end{tabular}

% \begin{itemize}
% \item $(x,y,0,k) \sim (x+1,y,0,k)$ \hspace{1cm} (horizontal inter-cell couplings)
% \item $(x,y,1,k) \sim (x,y+1,1,k)$ \hspace{1cm} (vertical inter-cell couplings)
% \item $(x,y,0,k_1) \sim (x,y,1,k_2)$ \hspace{1cm} (intra-cell couplings)
% \end{itemize}

We construct native clique embeddings using {\em wires}. For $t\geq 1$, a {\em horizontal wire} of length $t$ is a contiguous set of vertices $\{(x+i,y,0,k): i\in[0,t-1]\}$, whose induced subgraph is a path on $t$ vertices.  Likewise, a {\em vertical wire} is a set $\{(x,y+i,1,k): i\in[0,t-1]\}$.  An {\em ell} is the union of a horizontal wire and a vertical wire where there is an edge between one end of the horizontal wire and one end of the vertical wire.  Note that these ends are necessarily in the same unit cell, which we call the {\em corner} of the ell; for an ell $\ell$ we denote the corner by $c(\ell)$.  In the embeddings we study, each chain is an ell.  Our aim now is to specify an orderly way of arranging them into a clique minor.

Looking at Figure \ref{fig:nativeembedding}, one may notice that chains appear in sets that intersect the same unit cells.  With this in mind, for an ell $\ell$ we define its {\em ell block} $(X(\ell),c(\ell))$: $X(\ell)$ is the set of unit cells intersecting $\ell$, and again $c(\ell)$ is the corner of $\ell$, which we must specify for the ell block in order to avoid ambiguity in the case of horizontal or vertical wires of length $1$. We define an {\em ell bundle} $B$ as a (possibly empty) set of vertex-disjoint ells $\ell_1,\dots,\ell_p$ with the same ell blocks, i.e.\ such that $|\{(X(\ell),c(\ell)) \mid \ell \in B\}| \leq 1$.

A \emph{block clique embedding} is a set $\cX$ of $n$ ell blocks $\{(X_1,c_1),\dots,(X_n,c_n)\}$ such that
\begin{itemize}
\item each $X_i$ contains $n$ unit cells (so ells have length $n+1$), and
\item every distinct pair $X_i$, $X_j$ in $\cX$ intersects at exactly one unit cell, which is in the horizontal component of one ell block and the vertical component of the other. 
\end{itemize}

A {\em native clique embedding} respecting a block clique embedding $\cX$ is a collection $\cB$ of ell bundles $\{B_i,\dots,B_n\}$ such that for each $i$ and for each $\ell\in B_i$, $(X_i,c_i) = (X(\ell),c(\ell))$, i.e.\  $(X_i,c_i)$ is the ell block for each ell in $B_i$.  From this definition and the above, we infer that
\begin{itemize}
\item any two ells $\ell$ and $\ell'$ in the same bundle have exactly two edges between them, both in the unit cell $c(\ell) = c(\ell')$, and
\item any two ells $\ell$ and $\ell'$ in different bundles have exactly one edge between them, and it is in the unit cell $X(\ell)\cap X(\ell')$.
\end{itemize}
Hence a native clique embedding is a clique embedding.

It turns out that in a block clique embedding $\cX=\{(X_1,c_1),\dots,(X_n,c_n)\}$, the corners $c_1,\dots,c_n$ form a permutation in the $n\times n$ matrix representing the unit cells of the graph $\mathcal C_{n,n,L}$.  These permutations have a specific structure that is in direct correspondence with a class of permutations representable by circular point sets studied recently by Vatter and Waton \cite{circlepoints}. 

The following theorem provides a constructive classification of block clique embeddings and shows that, in contrast with triangle embeddings, native clique embeddings exist in abundance.

\begin{theorem}\label{thm:blockclique}
  In a $\fc_{n,n,L}$ Chimera graph for $n\geq 2$, there are $4^{n-1}$ block clique embeddings that contain $n$ ell blocks. In particular, they are in natural bijection with the set $\{\mathsf{E,W}\}\times \{\mathsf{NE,NW,SE,SW}\}^{n-2}\times\{\mathsf{N,S}\}$.
\end{theorem}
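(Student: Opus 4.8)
The plan is to prove the count by constructing an explicit bijection between the product set $\{\mathsf{E,W}\}\times\{\mathsf{NE,NW,SE,SW}\}^{n-2}\times\{\mathsf{N,S}\}$ and the block clique embeddings of $\fc_{n,n,L}$; since that set has $2\cdot 4^{n-2}\cdot 2 = 4^{n-1}$ elements, this settles both assertions at once. I take as given the remark preceding the theorem, that the corners of any block clique embedding $\cX=\{(X_1,c_1),\dots,(X_n,c_n)\}$ occupy distinct rows and distinct columns of the $n\times n$ cell grid. Each $X_i$ then decomposes canonically into a horizontal wire in the row of $c_i$ and a vertical wire in the column of $c_i$, of cell-lengths $h_i$ and $v_i$; since the two wires share only the corner cell and $X_i$ has $n$ cells, $h_i+v_i = n+1$, and each wire emanates from $c_i$ in one of two directions ($\mathsf{E}/\mathsf{W}$ horizontally, $\mathsf{N}/\mathsf{S}$ vertically).

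The core of the argument is to convert the global intersection condition into a local labelling. For a pair $i\neq j$, the requirement that $X_i\cap X_j$ be a single cell lying in the horizontal part of one block and the vertical part of the other says exactly that precisely one of the two potential crossings occurs: the horizontal wire of $i$ meets the vertical wire of $j$ at the cell in $c_j$'s column and $c_i$'s row, or vice versa. I would write each ``crossing occurs'' as an explicit condition on the two directions and the two lengths --- a horizontal wire reaches a given far column only if it points toward that column and is long enough, and likewise for vertical wires --- and then show that imposing ``exactly one crossing'' over all pairs forces the lengths to be tightly determined. Concretely, I expect to prove the key structural lemma that the horizontal lengths $h_1,\dots,h_n$ are all distinct, hence a permutation of $\{1,\dots,n\}$ (equivalently the $v_i=n+1-h_i$ are), because $h_i+v_i=n+1$ leaves each wire exactly long enough to realise its required crossings and no longer. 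The four constant label strings should recover the four reflections of the triangle embedding, in which the corners lie on a diagonal and $h_i=n-i+1$, $v_i=i$; this serves as a useful sanity anchor.

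This lemma immediately produces the shape of the product set. The unique block with $h_i=n$ has $v_i=1$, so its vertical wire is a single cell and its $\mathsf{N}/\mathsf{S}$ choice is precisely the vacuous one that the ell-block definition was introduced to disambiguate; only its $\mathsf{E}/\mathsf{W}$ direction carries information. Symmetrically the unique block with $h_i=1$ contributes only an $\mathsf{N}/\mathsf{S}$ label, while each of the remaining $n-2$ blocks has both wires of length at least $2$ and so carries a genuine quadrant label in $\{\mathsf{NE,NW,SE,SW}\}$. Ordering the blocks by decreasing $h_i$ then supplies the canonical linear order underlying the product, with the two terminal binary slots and $n-2$ interior quaternary slots. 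To finish I would build the bijection greedily: reading a label string left to right, place the blocks in this order, using each block's label to fix its wire directions and, via the single-crossing requirement against all previously placed blocks together with $h_i+v_i=n+1$, to fix its corner and lengths, maintaining the invariant that all placed pairs cross exactly once and every placed wire lies inside the grid. Injectivity is then clear, since the directions can be read back off the embedding and distinct strings give distinct direction patterns.

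The main obstacle is twofold, and both parts live in this passage from ``global'' to ``local.'' First, the structural lemma that the lengths are a permutation of $\{1,\dots,n\}$ --- equivalently that there is exactly one purely-horizontal and one purely-vertical block --- is the real content that explains the two binary endpoints and must be proved rather than assumed. Second, surjectivity: I must show that no admissible label string ever forces a wire off the grid or creates a double or absent crossing, so that all $4^{n-1}$ strings are realised. I expect to handle both by induction on the number of placed blocks, and this is exactly the point at which the correspondence with Vatter and Waton's circular point sets \cite{circlepoints} is most useful, since it repackages the admissible corner permutations in a form where the independence of the interior choices and the degeneracy at the two extremes become transparent.
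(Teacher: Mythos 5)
Your plan follows essentially the same route as the paper's proof: a distinctness lemma for the wire lengths (this is the paper's Lemma~\ref{lem:height}, phrased there in terms of heights, which is equivalent via $h_i+v_i=n+1$), an ordering of the blocks by length that exposes the two degenerate binary slots at the ends, an iterative placement of blocks driven by the direction labels, and an inverse map that reads the labels back off the embedding. The differences are cosmetic: you label a block by the directions its wires emanate from the corner, while the paper labels it by the position of the corner relative to a fixed ``center'' cell, which is the complementary convention.

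That said, the two steps you defer are exactly where the work lies, and one of them would fail as described. First, the distinct-lengths lemma is supported only by the heuristic that each wire is ``exactly long enough to realise its required crossings''; the paper's proof is a genuine case analysis (two properly intersecting blocks of the same intermediate height can be configured in only two ways up to symmetry, and in each a counting argument on intersection cells --- every non-corner cell of every block must host an intersection --- yields a contradiction), and nothing in your sketch substitutes for it. Second, your greedy left-to-right placement is not well defined at the first block: with $h_1=n$ the label $\mathsf{E}$ or $\mathsf{W}$ fixes only the column of $c_1$, its vertical wire is a single cell, and there are no previously placed blocks to cross, so the row of $c_1$ is unconstrained by purely local data. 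The paper resolves this with look-ahead --- $y_1$ is set to one plus the number of $\mathsf{S}$'s occurring after $s_1$ in the word --- and the subsequent placements are then controlled by an explicit invariant (the shrinking-and-growing ``working rectangle''), which is also what delivers the surjectivity and in-grid containment you flag as obstacles. Without the look-ahead and an invariant of this kind, the map from words to embeddings is not a function, and the count of $4^{n-1}$ does not follow.
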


To prove Theorem \ref{thm:blockclique}, we first show that each ell block in a block clique embedding has a distinct shape. Define the {\em width} of an ell as the number of vertices in its horizontal wire, and its {\em height} as the number of vertices in its vertical wire.  All ells in an ell bundle have the same width and height, so define the width and height of an ell bundle or an ell block as the width and height of its constituent ells. 

\begin{lemma}\label{lem:height}
  Let $\cX=\{(X_1,c_1),\dots,(X_n,c_n)\}$ be a block clique embedding in $\fc_{n,n,L}$. Then the ell blocks of $\cX$ have distinct heights.
\end{lemma}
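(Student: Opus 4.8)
The plan is to translate the single-crossing condition into a tournament on the $n$ ell blocks, show that the out-degrees are exactly the widths minus one, and then argue that those out-degrees must be distinct; since an ell covering $n$ cells has $\mathrm{width}+\mathrm{height}=n+1$, distinct widths is equivalent to the claimed distinct heights. The first thing to note is that pairwise reasoning alone cannot succeed: one can place two ells of equal height so that they meet in exactly one crossing cell of the correct type, so the statement is genuinely global and the argument must use all $\binom{n}{2}$ conditions simultaneously.

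First I would record the preliminary fact (already noted in the text) that the corners occupy distinct rows and distinct columns. The clean half of this is that two corners in a common row force both horizontal wires into that row, and if those wires were disjoint the two ells could not meet in a horizontal--vertical crossing at all: the only candidate crossing cells are the two corners, each of which would then have to lie on the \emph{other} ell's horizontal wire. Overlapping same-row wires are excluded because a shared cell is horizontal in both ells, and hence can be the unique crossing cell only in degenerate corner-sharing configurations that the full system rules out.

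Granting distinct rows and columns, the heart is a double count. Writing $\mu(z)$ for the number of ells containing a unit cell $z$, summing $|X_c|=n$ over the $n$ blocks gives $\sum_z \mu(z)=n^2$, while the single-crossing hypothesis gives $\sum_z \binom{\mu(z)}{2}=\binom{n}{2}$. Distinct rows and columns force $\mu(z)\le 2$, since three wires through one cell would repeat a type and hence pin two corners into the same row or column. These two identities then fix the coverage rigidly: exactly $\binom{n}{2}$ cells are doubly covered (the crossings) and exactly $n$ are singly covered. Each corner is singly covered and there are $n$ of them, so the singly covered cells are precisely the corners, and every non-corner cell of every ell is a crossing with exactly one other ell. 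Consequently, orienting $c\to c'$ when $c$'s horizontal wire crosses $c'$'s vertical wire, the out-neighbours of $c$ are exactly the blocks whose corner columns lie in $c$'s horizontal interval $H_c$, so the out-degree of $c$ is $w_c-1$.

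It then remains to show these out-degrees are distinct, which I reduce to a one-dimensional statement: the intervals $H_c\subseteq\{1,\dots,n\}$ each contain their own corner column, and for every pair exactly one of them contains the other's corner column. Such intervals pairwise intersect (the contained corner column lies in both), so by Helly's theorem on the line they share a common column $p$; the block cornered at $p$ is then beaten by every other block, which forces $H_p=\{p\}$ and width $1$. Deleting column $p$ and relabelling shrinks every surviving interval by exactly one while preserving the exactly-one-containment property, so induction on $n$ yields widths $\{1,\dots,n\}$, all distinct, completing the proof. The main obstacle I expect is the counting step: one must resist arguing locally and instead use the global incidence identities to force the coverage to be exactly two-to-one off the corners, which is precisely what converts the geometry into a transitive tournament.
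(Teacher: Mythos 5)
Your overall architecture --- translate the pairwise intersection condition into a tournament on the $n$ ell blocks, show each block's out-degree equals its width minus one via a global double count, and prove the out-degrees distinct by a Helly-plus-induction argument on intervals --- is genuinely different from the paper's proof, which instead fixes two ells of equal height, observes by pigeonhole that every non-corner cell of every ell block must be a crossing, and eliminates the two possible intersection patterns by a short case analysis with figures. Your downstream machinery is sound: granted that the corners occupy distinct rows and distinct columns, the identities $\sum_z \mu(z)=n^2$ and $\sum_z \binom{\mu(z)}{2}=\binom{n}{2}$ together with $\mu(z)\le 2$ force exactly $n$ singly covered cells; these must be the corners (a second ell through a corner cell would put two horizontal wires in one row or two vertical wires in one column); the out-degree of each block is then exactly $\mathrm{width}-1$ and equals the number of other corners in its column interval; and the interval argument correctly yields widths $\{1,\dots,n\}$, hence distinct heights.

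The genuine gap is at the foundation: you never prove that the corners lie in distinct rows and distinct columns, and this is not a routine preliminary. Your own discussion concedes the problem: two ells whose horizontal wires lie in the same row and overlap in exactly one cell, that cell being the corner end of one of them (two wires abutting end to end), do satisfy the pairwise requirement of a block clique embedding, since the shared cell lies in the horizontal component of one block and the vertical component of the other. You dismiss these as ``degenerate corner-sharing configurations that the full system rules out,'' but no argument is supplied, and the exclusion genuinely requires invoking the other $n-2$ blocks; moreover you need it for pairs of arbitrary, possibly unequal heights, which is more than the paper's case analysis addresses. Since $\mu(z)\le 2$, the identification of singly covered cells with corners, and the formula $\text{out-degree}=w-1$ all collapse without this fact, the proof as written does not go through. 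To repair it you would need a standalone lemma excluding same-row pairs of horizontal wires and same-column pairs of vertical wires --- essentially the content hidden in the paper's Case 2 --- after which your counting and Helly argument would complete the proof and, as a bonus, show directly that the heights are exactly $1,\dots,n$.
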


\begin{proof}
  We will establish that $\cX$ has a unique ell block of height $i$ for each $1 \leq i \leq n$.  Clearly two ells of height 1 or two ells of height $n$ cannot intersect properly (i.e. in exactly one unit which is horizontal for one ell block and vertical for the other).  Assume for contradiction that $\cX$ contains two ells $(X,c)$ and $(X',c')$ of height $1 < i < n$.  Their horizontal and vertical components must occupy different rows and columns respectively. Up to symmetry, there are two cases to consider where ells of the same height intersect properly, shown in Figure \ref{fig:lemmacases}.  In both cases, we name the upper ell block $(X,c)$.
\begin{figure}
  \begin{center}
    \begin{tabular}{ccc}
      \includegraphics{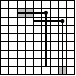}
      &\qquad &
      \includegraphics{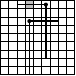}
\vspace{.3em}      \\

{\normalsize      Case 1} & & {\normalsize Case 2}
    \end{tabular}
  \end{center}
\caption{Two cases of intersecting ell blocks with the same shape.\label{fig:lemmacases}}
\end{figure}
  Since there are $n$ ell blocks and $n$ cells per block, every non-corner cell of every ell block must intersect another ell block. In Case 1, we have shaded cells for which the horizontal or vertical coordinate is unique among the two ells.  Consider an ell block $(Y,d)$ which properly intersects $(X,c)$ in a shaded cell. To properly intersect $(X',c')$, $(Y,c)$ cannot intersect $(X,c)$ again, so must intersect $(X',c')$ in a shaded cell also.  Therefore, $Y$ must size greater than $n$, a contradiction. In Case 2, we have shaded the cell which lies directly north of a corner.  In this case, it is clear that no ell block can properly intersect $(X,c)$ at the gray cell and also intersect $(X',c')$ properly. \qed
\end{proof}

Without loss of generality, we will assign labels to each ell block $(X_i,c_i) \in \cX$ so that $X_i$ has height $i$ and width $n-i+1$.

\begin{proof}[of Theorem \ref{thm:blockclique}] We first give a mapping from the set of $4^{n-1}$ words to the set of block clique embeddings and then provide the inverse mapping.

  Let $W=s_1,\dots,s_n$ be a word for which $s_1\in\{\mathsf{E,W}\}$, each of $s_2,\dots,s_{n-1}$ is in $\{\mathsf{NE,NW,SE,SW}\}$, and $s_n\in \{\mathsf{N,S}\}$.  We construct a block clique embedding $\cX=\{(X_1,c_1),\dots,(X_n,c_n)\}$ from $W$ in such a way that each ell block $(X_i,c_i)$ corresponds to the subword $s_i$. We denote the location of a corner $c_i$ by its Cartesian coordinates $(x_i,y_i) \in \{1,\dots,n\}^2$ and coordinates increase to the east (for $x$) and north (for $y$).

  If $s_1=\mathsf{W}$ we place $c_1$ so that it is west of the remaining corners, i.e.\ $x_1=1$.  If $s_1=\mathsf{E}$, we place $c_1$ so that it is east of the remaining corners, i.e.\ $x_1=n$.  We select $y_1$ so that there are $y_1-1$ $\mathsf{S}$s following the subword $s_1$ in the word $W$.  In Figure \ref{fig:thm1first}, we show two initial ell block placements for words where $*$ denotes a (possibly empty) subword containing zero $\mathsf{S}$s.

\begin{figure}
  \begin{center}
    \begin{tabular}{ccc}
      \includegraphics{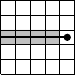}
      &\qquad \qquad &
      \includegraphics{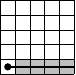}
      \\
      {\normalsize $\mathsf{E*S*S}\,*$} && {\normalsize $\mathsf{W}\,*$}
    \end{tabular}
  \end{center}
\caption{Two examples of the first ell block's orientation and location. $*$ denotes a (possibly empty) subword containing zero $\mathsf{S}$s.   \label{fig:thm1first}}
\end{figure}

  The placement of the first ell block defines a {\em working rectangle} $R_1$ of cells intersecting horizontal wires but not vertical wires in the partially-constructed block clique embedding.  Every time we place an ell block, the working rectangle becomes one unit taller and narrower, and the vertical component of the recently-placed ell covers either the left-most or right-most column of the previous working rectangle.

  Suppose the working rectangle after the selection of $i$ ell blocks is $R_i=[x, x'] \times [y, y']$ in Cartesian coordinates, where $x\leq x'$ and $y \leq y'$.  We choose corner $c_i$ based on subword $s_i$ as follows, noting that $x'=x+n-i-1$ and $y'=y+i-1$:
  \begin{itemize}\itemsep0pt
  \item If $s_i=\mathsf{NE}$ or $\mathsf N$, $c_i = (x',y'+1)$.
  \item If $s_i=\mathsf{NW}$ or $\mathsf N$, $c_i = (x,y'+1)$.
  \item If $s_i=\mathsf{SE}$ or $\mathsf S$, $c_i = (x',y-1)$.
  \item If $s_i=\mathsf{SW}$ or $\mathsf S$, $c_i = (x,y-1)$.
  \end{itemize}
  Letting $c_i = (x_i,y_i)$, we then choose
  $
  X_i = (\{x_i\}\times [y,y']) \cup ( [x,x']\times \{y_i\})
  $
  and update the working rectangle
  $
  R_{i+1} = ([x, x']\backslash \{x_i\}) \times ([y, y'] \cup \{y_i\}).
  $
  Observe that every time that we find a $\mathsf{W}$, $x$ increases by 1 and every time we find a $\mathsf{E}$, $x'$ decreases by 1; every time we find an $\mathsf{N}$, $y$ increases by 1 and every time an $\mathsf{S}$, $y'$ decrease by 1.  Figure \ref{fig:thm1second} shows an example of how the construction might proceed.

\begin{figure}
  \begin{center}
    \begin{tabular}{ccccccccc}
      \includegraphics{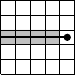} &
      \includegraphics{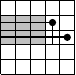} &
      \includegraphics{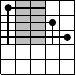} &
      \includegraphics{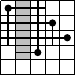} &
      \includegraphics{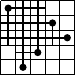}
      \\
      {\normalsize$\mathsf{E}\cdots$} & {\normalsize$\mathsf{ENE}\cdots$} & {\normalsize$\mathsf{ENENW}\cdots$} & {\normalsize$\mathsf{ENENWSE}\cdots$} & {\normalsize$\mathsf{ENENWSES}$}
    \end{tabular}
  \end{center}
\caption{An example of the iterative construction of a block clique embedding.\label{fig:thm1second}}
\end{figure}

  Note that $R_i$ never shares a column with a corner in $\{c_j \mid j\leq i\}$, and the set of rows intersected by $R_i$ is the same set of rows intersected by $\{c_j \mid j\leq i\}$.  It therefore follows from the construction that $\{c_1,\dots,c_n\}$ is a permutation.  Furthermore, by the construction, for $i<j$ there is always a unit cell $X_i\cap X_j$ in the working rectangle $R_i$ -- specifically, it is $(x_j,y_i)$.  So $\cX=\{(X_1,c_1),\dots,(X_n,c_n)\}$ is indeed a block clique embedding.

  Now, we invert our construction to show that it is in fact bijective.  Let $\cX=\{(X_1,c_1),\dots,(X_n,c_n)\}$ be a block clique embedding in which $X_i$ has height $i$; we will reconstruct our word $W$. We call $(x_n,y_1) = X_1 \cap X_n$ the center of $\cX$, and say that a point $(x,y)$ lies to the east or west of the center if $x > x_n$ or $x < x_n$ respectively, and likewise north and south.
  \begin{itemize}\itemsep0pt
  \item Let $s_1 = \mathsf{E}$ if the corner $c_1$ lies to the east of the center, and $\mathsf{W}$ otherwise.
  \item For $1 < i < n$, let $s_i = a_ib_i$, where $a_i = \mathsf{N}$ if $c_i$ lies to the north of the center, and $\mathsf{S}$ otherwise; and $b_i = \mathsf{E}$ if $c_i$ lies to the east of the center, and $\mathsf{W}$ otherwise. 
  \item Let $s_n = \mathsf{N}$ if the corner $c_n$ lies to the north of the center, and $\mathsf{S}$ otherwise.
  \end{itemize}
  Therefore, $W = s_1s_2\ldots s_n$ is in $\{\mathsf{E,W}\}\times \{\mathsf{NE,NW,SE,SW}\}^{n-2}\times\{\mathsf{N,S}\}$.  To see that this construction is in fact the inverse, we observe that each ell block $(X_i,c_i)$ is associated to the direction $s_i$ in which the corner $c_i$ lies from the center, so we obtain the same word we began with. \qed
\end{proof}

%%%%%%%%%%%%%%%%%%%%%%%%%%%%%%%%%%%%%%%%%%%%%%%%%%%%%%%%%%%%%%%%%%%%%%%%%%%%%%%% 
%%%%%%%%%%%%%%%%%%%%%%%%%%%%%%%%%%%%%%%%%%%%%%%%%%%%%%%%%%%%%%%%%%%%%%%%%%%%%%%% 
%%%%%%%%%%%%%%%%%%%%%%%%%%%%%%%%%%%%%%%%%%%%%%%%%%%%%%%%%%%%%%%%%%%%%%%%%%%%%%%% 
\section{Finding optimal native clique embeddings in induced subgraphs\label{sec:3}}

\newcommand{\mpe}{\mathit{maxPartialEmbedding}}
\newcommand{\mb}{\mathit{maxBundle}}
\newcommand{\mext}{\mathit{maxExtension}}
\newcommand{\rin}{\mathit{R_{to}}}
\newcommand{\rout}{\mathit{R_{from}}}
\newcommand{\NCE}{\mathit{NativeCliqueEmbed}}

In this section we describe an algorithm to find a largest native clique embedding $\cB=\{B_i,\dots,B_n\}$ in an induced subgraph $G$ of $\fc_{M,N,L}$ with $n\leq M,N$.  Our algorithm necessarily takes as input a parameter $n$, which determines the size of the chains, i.e.\ $n+1$.

Our algorithm is informed by the proof of Theorem~\ref{thm:blockclique}.  We use dynamic programming to maximize the block clique embeddings with each working rectangle $R$, and do so in an orderly way which results in a polynomial-time algorithm.

As a preprocessing step, for each ell block $(X,c)$ we compute a maximum bundle and store it as $\mb(X,c)$ (it is straightforward to do this in $O(nL)$ time per ell block).  We then use this information to construct block clique embeddings as in the proof of Theorem~\ref{thm:blockclique}: adding one ell block at a time, with working rectangles of increasing height.  Let $(X,c)$ be an ell block of height $i$.  If $1 \leq i \leq n-1$, there is a unique working rectangle $\rout(X,c)$ that can be in effect immediately after $(X,c)$ is placed.  If $2\leq i \leq n$, there is a unique working rectangle $\rin(X,c)$ that can be in effect immediately before $(X,c)$ is placed.  Note that for each working rectangle $R$, the sets
\newcommand{\lbto}{\mathit{X_{to}}}
\newcommand{\lbfrom}{\mathit{X_{from}}}
\begin{equation*}
  \lbfrom(R) :=  \{ (X,c) \mid R = \rin(X,c) \}
\end{equation*}
and
\begin{equation*}
  \lbto(R):=  \{ (X,c) \mid R = \rout(X,c) \}
\end{equation*}
each have size at most four (see Figure \ref{fig:extensions}).

\begin{figure}
  \begin{center}
    \begin{tabular}{ccccc}
      $R$ & $R_1$ & $R_2$ & $R_3$ & $R_4$ \\
      \includegraphics{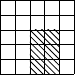} &
      \includegraphics{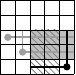} &
      \includegraphics{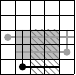} &
      \includegraphics{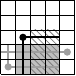} &
      \includegraphics{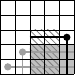}
    \end{tabular}
  \end{center}
  \caption{The four rectangles $R_i = \rin(X_i,c_i)$ with $(X_i,c_i) \in \lbto(R)$.  Gray ell blocks represent precomputed $\mpe(R_i)$ as in Lemma~\ref{lem:mpemax}. }\label{fig:extensions}

\end{figure}
For any a set of ell bundles $\cB = \{B_1, \cdots, B_n\}$ where each $B_i$ is contained in the ell block $(X_i, c_i)$ and  $\cX = \br{(X_1, c_1), \cdots, (X_n, c_n)}$,
\[
\|\cX\| := \abs{\bigcup_{i=1}^n \mb(X_i,c_i)} \geq \abs{\bigcup_{i=1}^n B_i}
\]
provided that the ell blocks are all distinct.  This enables us to construct maximum native clique embeddings while only considering the ell blocks involved.

For each working rectangle $R$ of height $i$ (and width $n-i$), our algorithm computes and stores a \emph{partial block clique embedding} with that working rectangle: a set of ell blocks $\cX_i = \{(X_1,c_1), \cdots, (X_i,c_i)\}$ such that

\begin{itemize}
\item $R = \rout(X_i,c_i)$,
\item $X_j$ has height $j$ for all $1 \leq j \leq i$
\item $\rin(X_j,c_j) = \rout(X_{j-1},c_{j-1})$ for all $2 \leq j \leq i$.
\end{itemize}

In particular, we compute \emph{maximum} partial block clique embeddings with working rectangle $R$; those which maximize $\|\cX_i\|$ over all partial block clique embeddings with a given working rectangle.
We denote a particular maximum partial block clique embedding for a given working rectangle $R$ by $\mpe(R)$, though many may exist.  Our algorithm will operate by extending maximum partial block clique embeddings by ells, so we define $\mext(X,c) = \mpe(\rin(X,c)) \cup \{(X,c)\}$ when $X$ has height $h > 1$, and $\mext(X,c) = \br{(X,c)}$ otherwise.

The following lemma encodes the key step of the algorithm: to find a maximum partial block clique embedding with working rectangle $R$, we need only consider partial block clique embeddings that include an ell block $(X,c)$ such that $R=\rout(X,c)$.

\begin{lemma}\label{lem:mpemax}
  Given a rectangle $R$ of height $i \geq 1$ and width $n-i$,
  \[
  \|\mpe(R)\| = \max_{(X,c) \in \lbto(R)} \abs{\mext(X,c)}.
  \]
\end{lemma}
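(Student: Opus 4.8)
The plan is to prove the two inequalities separately, realizing the optimal-substructure argument that underlies the dynamic program. Throughout, write $\cX_i=\{(X_1,c_1),\dots,(X_i,c_i)\}$ for a partial block clique embedding with working rectangle $R$ (so its tallest block has height $i$, $R=\rout(X_i,c_i)$, and hence $(X_i,c_i)\in\lbto(R)$), and read the measure as $\norm{\cdot}=\abs{\bigcup \mb(X,c)}$ on the constituent ell blocks.

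First I would isolate the one genuinely geometric ingredient: for distinct ell blocks of a (partial) block clique embedding the associated maximum bundles are \emph{vertex-disjoint}. By the structure established in Theorem~\ref{thm:blockclique}, any two blocks $X_a,X_b$ meet in exactly one unit cell, namely $(x_b,y_a)$ for $a<b$, which is interior to the horizontal wire of one block and the vertical wire of the other and is the corner of neither. Thus one bundle meets the shared cell only in its side-$0$ (horizontal) qubits and the other only in its side-$1$ (vertical) qubits, and in every other cell the two blocks are disjoint. Consequently $\norm{\cdot}$ is \emph{additive under adjoining a block}: whenever $\cX'\cup\{(X,c)\}$ is a legal partial embedding, $\norm{\cX'\cup\{(X,c)\}}=\norm{\cX'}+\abs{\mb(X,c)}$. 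In particular $\norm{\mext(X,c)}=\norm{\mpe(\rin(X,c))}+\abs{\mb(X,c)}$ when $(X,c)$ has height $>1$, and $\norm{\mext(X,c)}=\abs{\mb(X,c)}$ otherwise.

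For the bound $\norm{\mpe(R)}\le\max_{(X,c)\in\lbto(R)}\norm{\mext(X,c)}$, take $\cX_i=\mpe(R)$. If $i=1$ this is immediate, since $\mpe(R)$ is a single block and $\mext$ of a height-$1$ block is that same block. If $i>1$, delete the tallest block to obtain $\cX_{i-1}=\cX_i\setminus\{(X_i,c_i)\}$; by the chaining condition its working rectangle is $\rout(X_{i-1},c_{i-1})=\rin(X_i,c_i)$, so $\cX_{i-1}$ competes against $\mpe(\rin(X_i,c_i))$ and maximality gives $\norm{\cX_{i-1}}\le\norm{\mpe(\rin(X_i,c_i))}$. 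Combining this with additivity yields $\norm{\mpe(R)}=\norm{\cX_{i-1}}+\abs{\mb(X_i,c_i)}\le\norm{\mpe(\rin(X_i,c_i))}+\abs{\mb(X_i,c_i)}=\norm{\mext(X_i,c_i)}$, which is at most the right-hand maximum because $(X_i,c_i)\in\lbto(R)$.

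For the reverse bound I would fix any $(X,c)\in\lbto(R)$ and verify that $\mext(X,c)$ is itself a legal partial block clique embedding with working rectangle $R$: its blocks have heights $1,\dots,i$; its chaining conditions are those of $\mpe(\rin(X,c))$ together with the identity $\rin(X,c)=\rout(X_{i-1},c_{i-1})$ linking the top block of $\mpe(\rin(X,c))$ to $(X,c)$; and its top block satisfies $\rout(X,c)=R$. (That these working-rectangle conditions force the proper single-cell, horizontal/vertical pairwise intersections is exactly the bookkeeping carried out in the construction of Theorem~\ref{thm:blockclique}.) Hence $\mext(X,c)$ is one of the competitors defining $\mpe(R)$, so $\norm{\mext(X,c)}\le\norm{\mpe(R)}$, and maximizing over the (at most four, see Figure~\ref{fig:extensions}) elements of $\lbto(R)$ finishes the argument. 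The only real obstacle is the disjointness claim of the second paragraph: it is what makes peeling off the tallest block an \emph{exact} decomposition of $\norm{\cdot}$ and thereby upgrades optimal substructure to the stated recurrence; everything else is routine bookkeeping with working rectangles.
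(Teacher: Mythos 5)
Your proof is correct and follows essentially the same route as the paper's: both hinge on peeling the tallest ell block off a maximum partial embedding with working rectangle $R$ and comparing the remainder against $\mpe(\rin(X_i,c_i))$, the paper phrasing this as a proof by contradiction (framed as induction on $i$) where you argue the two inequalities directly. The only substantive addition is your explicit justification of the additivity $\norm{\cX_i}=\norm{\cX_{i-1}}+\abs{\mb(X_i,c_i)}$ via vertex-disjointness of the bundles of distinct ell blocks, a step the paper uses without comment.
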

\begin{proof} We proceed by induction on $i$.  When $i=1$, our claim follows immediately from definitions.  Assume that $i>1$ and that our claim holds for all working rectangles of height $i-1$.

  We consider the complete set of partial block clique embeddings with working rectangle $R$,
  $
  S = \{\cX_i \mid \rout(X_i,c_i) = R\} = \{\cX_i \mid (X_i,c_i) \in \lbto(R)\}
  $
  where $(X_i,c_i)$ is the ell block with height $i$ in $\cX_i$.  By definition,
  $
  \|\mpe(R)\| = \max_{\cX_i \in S} \|\cX_i\|.
  $
  For contradiction, pick some maximum $\cX_i \in S$ and suppose that
  \[
  \|\cX_i\| > \|\mpe(\rin(X,c)) \cup \{(X,c)\}\|
  \]
  for all $(X,c) \in \lbto(R)$.  In particular, letting $\cX_{i-1} = \cX_i \setminus \br{(X_i,c_i)}$,
  \begin{eqnarray*}
    \|\cX_i\|&=& \norm{\cX_{i-1}} + \norm{\br{(X_i,c_i)}}\\ &>& \|\mpe(\rin(X_i,c_i))\| + |\mb(X_i,c_i)|,
  \end{eqnarray*}
  a contradiction since $\cX_{i-1}$ has working rectangle $\rin(X_i,c_i)$. \qed
\end{proof}

We now present the algorithm.  The idea follows from Lemma~\ref{lem:mpemax}, to compute $\mpe(R)$ for rectangles of increasing height.  To do so we treat the set of possible working rectangles as a digraph, where $R\rightarrow R'$ if and only if there is an ell block $(X,c)$ for which $R\in \rout(X,c)$ and $R'\in\rin(X,c)$.  This means that if $R\rightarrow R'$, the height of $R'$ is one more than the height of $R$.  The number of edges in this digraph is equal to the number of ell blocks.  To compute $\mpe(R')$, assuming that we have computed $\mpe(R)$ for all rectangles of lesser height, we simply set $\mpe(R')$ to be a maximum partial block clique embedding in the set
$$\left\{ \mext(X,c) \mid (X,c)\in \lbto(R') \right\}.$$
Once we have computed $\mpe(R)$ for all rectangles $R$ of height $n-1$ and width $1$, we pick a maximum-sized clique embedding from the set
$$\left\{ \mext(X,c) \mid (X,c)\textrm{ has height }n-1\textrm{ and width }1 \right\}.$$
Pseudocode is given in Algorithm \ref{alg:cliqueembed}.
\begin{algorithm}
  \begin{algorithmic}[1]

    \Function{NativeCliqueEmbed}{$G$, $n$}
    \For{$i=1,\dots,n-1$}
    \For{each rectangle $R$ of height $i$ and width $n-i$}
    \State $\mpe(R)\gets \emptyset$
    \EndFor
    \For{each ell block $(X,c)$ of height $i$ and width $n-i+1$\label{line:forblock}}
    \State $\cB \gets\mpe(\rin(X,c))\cup\{(X,c)\}$
    \If{$\|\mpe(\rout(X,c))\| < \|\cB\|$}
    \State $\mpe(\rout(X,c)) \gets \cB$
    \EndIf
    \EndFor
    \EndFor
    \State $\cB_{\mathit{max}} \gets \emptyset$ \label{line:finalpass}
    \For{each ell block $(X,c)$ of height $n$ and width $1$}
    \State $\cB \gets \mpe(\rin(X,c)) \cup\br{(X,c)}$ \label{line:lastextension}
    \If{$\|\cB_{\mathit{max}}\| < \|\cB\|$}
    \State $\cB_{\mathit{max}} \gets \cB$
    \EndIf
    \EndFor
    \State \Return{$\br{\mb(X,c,G) \mid (X,c) \in \cB_{\mathit{max}}}$} \label{line:return}
    \EndFunction
  \end{algorithmic}
  \caption{The algorithm to find a maximum-sized native clique embedding in an induced subgraph of a Chimera graph.}\label{alg:cliqueembed}
\end{algorithm}

\begin{theorem} 
  The $\NCE$ algorithm finds a maximum-sized native clique embedding with chain length $n+1$ in polynomial time.
\end{theorem}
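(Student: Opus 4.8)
The plan is to separate the argument into correctness and a running-time bound, with correctness resting on the recurrence of Lemma~\ref{lem:mpemax} combined with the iterative construction from the proof of Theorem~\ref{thm:blockclique}. For correctness I would prove, by induction on the height $i$, the invariant that once the main loop has finished processing height $i$, the stored $\mpe(R)$ is a maximum partial block clique embedding for every working rectangle $R$ of height $i$ and width $n-i$. The base case $i=1$ is immediate, since $\mext(X,c)=\{(X,c)\}$ and the inner loop merely records the largest single height-$1$ block with $\rout(X,c)=R$. For the inductive step, a height-$i$ block $(X,c)$ has $\rin(X,c)$ of height $i-1$, so when height $i$ is processed the value $\mpe(\rin(X,c))$ has already been finalised by the inductive hypothesis; the inner loop then forms $\mext(X,c)=\mpe(\rin(X,c))\cup\{(X,c)\}$ for exactly the blocks in $\lbto(R)$ and keeps the best, and by Lemma~\ref{lem:mpemax} this best value equals $\|\mpe(R)\|$, establishing the invariant. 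The processing order is forced by the working-rectangle digraph: an edge $R\to R'$ raises height by exactly one, so sweeping heights $1,\dots,n-1$ makes every prerequisite $\mpe(\rin(X,c))$ available in time.

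Next I would regard the final loop over height-$n$ blocks as one more instance of the same step — an $\mpe$ computation for the degenerate rectangle of height $n$ and width $0$ — so that $\cB_{\mathit{max}}$ finishes as a block clique embedding maximising $\|\cX\|$ over all \emph{complete} block clique embeddings in $G$. Here I would invoke Lemma~\ref{lem:height} (heights form a permutation of $1,\dots,n$, so there is a unique height-$n$ block) together with the construction of Theorem~\ref{thm:blockclique} to confirm that every block clique embedding arises precisely as a chain of blocks of heights $1,\dots,n$ linked by $\rin(X_j,c_j)=\rout(X_{j-1},c_{j-1})$, so the dynamic program ranges over all of them and nothing spurious. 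It then remains to pass from block clique embeddings back to native ones: ells in different bundles have different ell blocks and are therefore distinct, so $\|\cX\|=\sum_i|\mb(X_i,c_i)|$ is a disjoint sum, which is exactly what legitimises the incremental update $\|\mext(X,c)\|=\|\mpe(\rin(X,c))\|+|\mb(X,c)|$, while the inequality $\|\cX\|\ge|\bigcup_i B_i|$ shows that choosing a maximum bundle in each block is optimal. Consequently the returned collection $\{\mb(X,c,G)\mid(X,c)\in\cB_{\mathit{max}}\}$ realises the maximum possible number of ells, i.e.\ a maximum-sized native clique embedding with chains of length $n+1$.

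For the running time I would count objects. A block of fixed height is determined by a corner in the $M\times N$ grid and one of at most four orientations, giving $O(MN)$ blocks per height and $O(nMN)$ in all; the same bound holds for working rectangles. Preprocessing each $\mb(X,c)$ costs $O(nL)$, hence $O(n^2MNL)$ overall, and thereafter each pass of the inner loops performs $O(1)$ arithmetic on the stored $\|\cdot\|$ values, or $O(n)$ if embeddings are copied rather than reconstructed from backpointers. Every term is polynomial in $M,N,L,n$, and therefore in $|V(G)|=2MNL$.

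The step I expect to be the main obstacle is the verification in the second paragraph that the working-rectangle digraph indexes exactly the set of all (partial) block clique embeddings, so that the program neither overlooks a valid embedding nor admits an invalid one; this is where the clean recurrence of Lemma~\ref{lem:mpemax} must be reconciled with the full classification of Theorem~\ref{thm:blockclique}. By comparison, the induction and the counting are routine.
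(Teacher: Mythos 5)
Your proposal is correct and follows essentially the same route as the paper: correctness via the recurrence of Lemma~\ref{lem:mpemax} applied height by height, completeness of the search space via Theorem~\ref{thm:blockclique} (and Lemma~\ref{lem:height}), and a count of ell blocks and rectangles for the polynomial running time. Your explicit loop invariant and your treatment of the final loop as a degenerate dynamic-programming step are only presentational variations on the paper's argument, which instead reasons directly from a maximum native clique embedding $\cB$ to the stored value $\mpe(\rin(X_n,c_n))$.
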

\begin{proof}
  We first prove correctness, then the bound on running time.

  Note that the loop beginning on line~\ref{line:forblock} of Algorithm~\ref{alg:cliqueembed} iterates over all ell blocks of height $i$ and width $n-i$.  Given a rectangle $R$, there are up to four ell blocks $(X,c)$ for which $R = \lbto(X,c)$.  If we ignore all ell blocks except those incident to a particular rectangle $R$, this loop implements Lemma~\ref{lem:mpemax} directly. Therefore, when we reach line~\ref{line:finalpass}, we have computed $\mpe(R)$ for all $R$ with height $n-1$ and width 1.

  Let $\cB = \br{B_1,\cdots,B_n}$ be a maxmimum native clique embedding where $B_i$ is an ell bundle in the ell block $(X_i,c_i)$ with height $i$.  By Theorem~\ref{thm:blockclique}, $\cX_{n-1} = 
  \br{(X_1,c_1),\cdots,(X_{n-1},c_{n-1})}$ is a partial block clique embedding with working rectangle $\rin(X_n,c_n)$.  By Lemma~\ref{lem:mpemax}, \[\norm{\mpe(\rin(X_n,c_n))} \geq \norm{\cX_{n-1}},\] so we see a clique embedding of size at least $\norm{\cX_{n-1}} + |\mb(X_n,c_n)|$ when line~\ref{line:lastextension} is reached with $(X,c) = (X_n,c_n)$.  Therefore, when line~\ref{line:return} is reached, $\norm{\cB_{max}} \geq \norm{\cB}$ and a native clique embedding of maximum size has been found.

  We can compute $\mb(X,c,G)$ in $O(nL)$ time, and there are polynomially many ell blocks and rectangles:  There are at most $MN$ possible locations of a rectangle's lower-left corner, and $n$ possible shapes, thus at most $nMN$ rectangles.  Likewise there are at most $MN$ possible locations for an ell block's corner, and at most $4(n-1)$ ell blocks containing $n$ unit cells with a given corner, thus at most $4nMN$ ell blocks.

  It follows that each line in Algorithm \ref{alg:cliqueembed} is evaluated $O(nMN)$ times, and the preprocessing step of computing $\mb(X,c,G)$ for each ell block $(X,c)$ naively takes $O(n^2MNL)$ time.  Consequently, with the rough bound that each line in Algorithm \ref{alg:cliqueembed} takes $O(nL)$ time for a single evaluation, we can bound the total running time of our algorithm by $O(n^2MNL)$. \qed
\end{proof}

\paragraph{Remark.} The $O(n^2MNL)$-time bound on Algorithm \ref{alg:cliqueembed} is quadratic in the number of vertices in $\fc_{N,M,L}$, i.e.\ $O(n^2MNL)\subseteq O((MNL)^2)$. Assume $M  \leq N$ and $L$ is constant. With a little more care, we can modify Algorithm \ref{alg:cliqueembed} to achieve a bound of $O(N^3)$ instead of $O(N^4)$. Doing this involves (a) precomputing all maximum horizontal and vertical line bundles in time $O(LNM^2 + LMN^2)$ with a dynamic programming approach, which allows us to compute $|\mb(X,c,G)|$ in $O(1)$ time, and (b) exploiting the fact that throughout the algorithm, we need only keep track of the size of maximum partial embeddings and the route used to reach it (replacing $\mpe(R)$ with a mapping $R \mapsto (X,c) \in \lbto(R)$), rather than the embeddings themselves.

NativeCliqueEmbed gives a maximum native clique embedding for a fixed chain length.  To find a maximum native clique embedding over all chain lengths for a given graph, we simply repeat the process for each choice of $n\in \{2,\dots,\min\{M,N\}\}$. For $M  \leq N$ and constant $L$, this gives an overall running time on a subgraph of $\fc_{M,N,L}$ of $O(N^5)$ with the naive implementation and $O(N^4)$ with the refinement discussed above.

%%%%%%%%%%%%%%%%%%%%%%%%%%%%%%%%%%%%%%%%%%%%%%%%%%%%%%%%%%%%%%%%%%%%%%%%%%%%%%%% 
\subsection{Induced and general subgraphs}

We now discuss the motivation of using induced subgraphs and how to approach more general subgraphs.

Recall that we have restricted our attention to induced subgraphs rather than more general subgraphs because failed couplers adjoining working qubits are relatively rare.   In an induced subgraph, we will still focus on horizontal and vertical wires, and it is easy to find a maximum set of such wires in a line of unit cells.

A {\em maximum ell bundle} is a maximum-sized set of vertex-disjoint ells that occupy an ell block, and the {\em size} of an ell bundle is the number of ells contained therein.  It is simple to find a maximum ell bundle in a given ell block: finding maximum sets $S_H$ and $S_V$ of horizontal and vertical wires spanning a line of cells is trivial, and these lines may be paired off arbitrarily since the corner is a complete bipartite graph.  Here there may be difficulty in generalizing even this relatively simple optimization problem in the face of arbitrary edge deletion.  Finding a maximum ell bundle in an arbitrary Chimera subgraph is polynomially equivalent to finding a maximum clique in $D_2(\mathcal{L}(B))$ where $B$ is a bipartite graph, $\mathcal{L}(B)$ is the line graph of $B$, and $D_2(G)$ is the distance-2 graph with vertex set $V(G)$ and edge set $\{uv : d_G(u,v) = 2\}$.  We are unsure of the complexity of this problem, but expect that it is NP-complete.

Note that we can easily relax the requirement of an induced subgraph when couplers between unit cells are defective -- in any case, one just computes the number of wires in a line of cells where all qubits and couplers are contained in the subgraph.  In short, failed inter-cell couplers don't increase the difficulty of the problem.

If we restrict our attention to Chimera graphs $\fc_{N,N,L}$ where $L = O(\log N)$, or assume $L$ to be a constant, then this difficulty at the corners can be swept under the rug.  At present, this appears to be a reasonable consideration, as it is much easier from a manufacturing and design standpoint to increase $N$ than it is to increase $L$.

However, there is a further challenge introduced by intra-cell couplers.  Our algorithm intrinsically relies upon the assumption that couplers exist between any two ells whose unit cells intersect.  Missing intra-cell couplers void that assumption.  The easiest remedy for this obstruction is to consider vertex covers of the failed intra-cell couplers: Given a graph $G$ that is the subgraph of $ \fc_{M,N,L}$ induced by vertex set $W$, i.e.\ $G = \fc_{M,N,L}[W]$, let $U_1, \dots, U_s$ be the list of minimal vertex covers of the failed intra-cell couplers.  Then, for each $1 \leq i \leq s$, compute a largest clique minor considering $G[W]$ for the purpose of constructing ells, and $G[W \setminus U_i]$ for the purpose of growing cliques.  This approach gives a fixed-parameter tractable algorithm for finding the maximum native clique embedding in an arbitrary subgraph of Chimera in terms of the number of missing edges with both endpoints intact -- for the D-Wave Two processors installed at NASA Ames \cite{Venturelli2014} and ISI \cite{Albash2015}, this parameter was zero.

%%%%%%%%%%%%%%%%%%%%%%%%%%%%%%%%%%%%%%%%%%%%%%%%%%%%%%%%%%%%%%%%%%%%%%%%%%%%%%%% 
%%%%%%%%%%%%%%%%%%%%%%%%%%%%%%%%%%%%%%%%%%%%%%%%%%%%%%%%%%%%%%%%%%%%%%%%%%%%%%%% 
%%%%%%%%%%%%%%%%%%%%%%%%%%%%%%%%%%%%%%%%%%%%%%%%%%%%%%%%%%%%%%%%%%%%%%%%%%%%%%%% 
\section{Comparison with previous work\label{sec:previouswork}}
Klymko, Sullivan and Humble gave a greedy embedding algorithm that quickly produces similar embeddings to those in this paper~\cite{Klymko2014}.  Their algorithm produces plus-shaped chains (having nearly twice as many qubits), and restricts its search to embeddings where the vertical and horizontal components of pluses are only allowed to meet at the diagonal of a fixed square.  While our algorithm is slower, taking $O(N^4)$ time compared to their $O(N^3)$ for $\fc_{N,N,4}$, it is exhaustive, empirically embeds larger cliques, and produces chains of roughly half the size.

\begin{figure}
  \begin{center}
    \includegraphics{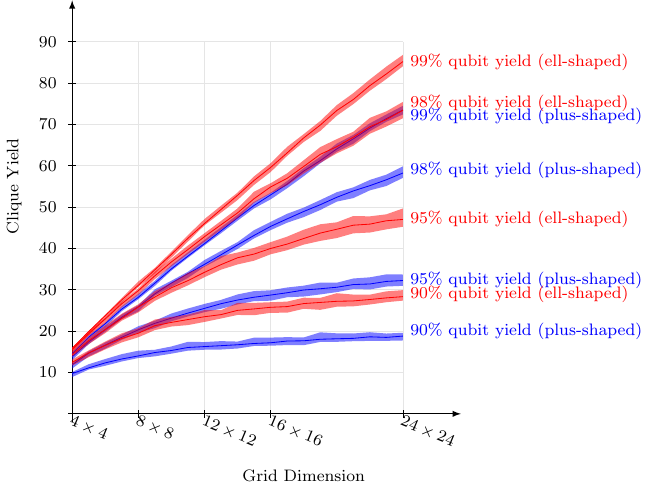}
  \end{center}
  \caption{(Color online) Comparison of clique yields for ell-shaped (red) and plus-shaped (blue) chains.  The solid lines denote the median, and the shaded regions encompass the middle two quartiles.}\label{fig:ellvplus}
\end{figure}

Given a family of clique minors, the {\em clique yield} of a graph $G$ over that family is the size of the largest clique minor in the family, in $G$.  In Figure~\ref{fig:ellvplus}, we compare ell- and plus-shaped chains as the grid size grows for several fixed percentages of operational qubits.  In both families, a similar asymptotic behavior becomes clear: for a fixed qubit failure rate, increasing the grid size gives diminishing returns in terms of clique yield.  However, the difference between these curves is significant, with ell-shaped chains producing much larger clique minors.

\bibliography{bibtex}

\end{document}